\def\BE{\begin{equation}}
\def\EE{\end{equation}}
\def\BEA{\begin{eqnarray}}
\def\EEA{\end{eqnarray}}
\newtheorem{thm}{Theorem}
\newtheorem{lem}[thm]{Lemma}
\newtheorem{corol}[thm]{Corollary}
\newcommand{\diag}{\mathop{\bf diag}}
\newcommand{\half}{ {-\tfrac{1}{2}} }
\newcommand{\lint}[1]{ {\int \limits_{#1}}}
\newcommand{\lexp}[1]{ {\exp\{\half #1\}}}
\newcommand{\ignore}[1]{{}}
\newcommand\etal{{\textsl{et al.\,\,}}}
\newcommand\vl{{\bf l}}
\newcommand\ns{{\normalsize}}
\newcommand\0{{\mathbf{0}}}
\newcommand\R{\mathbb{R}}
\newcommand\Z{\mathbb{Z}}
\newcommand\N{\mathcal{N}}
\newcommand\nbr{\Gamma}
\begin{document}

\twocolumn

% paper title
\title{A Low Density Lattice Decoder\\ via Non-parametric Belief Propagation}

% author names and affiliations
% use a multiple column layout for up to three different
% affiliations
\author{
\authorblockN{Danny Bickson}
\authorblockA{IBM Haifa Research Lab\\
Mount Carmel, Haifa 31905, Israel\\
Email: danny.bickson@gmail.com}
%\and
%\authorblockN{J. K. Johnson and M. Chertkov}
%\authorblockA{Complex Systems Group,\\
%T-13, Theoretical Division,\\
%Los Alamos National Laboratory\\
%Los Alamos, NM 87545\\
%Email: \{jasonj,chertkov\}@lanl.gov}
\and
\authorblockN{Alexander T. Ihler}
\authorblockA{Bren School of Information \\and Computer Science\\
University of California, Irvine\\
Email: ihler@ics.uci.edu}
\and
\authorblockN{Harel Avissar and Danny Dolev}
\authorblockA{School of Computer Science and Engineering\\
Hebrew University of Jerusalem\\
Jerusalem 91904, Israel\\
Email: \textbraceleft harela01,dolev\textbraceright@cs.huji.ac.il}
}

% make the title area
%\IEEEpeerreviewmaketitle
\maketitle

\begin{abstract}
The recent work of Sommer, Feder and Shalvi presented a new family of codes called low density lattice codes (LDLC)
that can be decoded efficiently and approach the capacity of the AWGN channel. A linear time iterative decoding
scheme which is based on a message-passing formulation on a factor graph is given.

  In the current work we report our theoretical findings regarding the relation between the LDLC decoder and belief propagation. We show that the LDLC decoder is an instance of non-parametric belief propagation and further connect it to the Gaussian belief propagation algorithm. Our new results enable borrowing knowledge from the non-parametric and Gaussian belief propagation domains into the LDLC domain. Specifically, we give more general convergence conditions for convergence of the LDLC decoder (under the same assumptions of the original LDLC convergence analysis). We discuss how to extend the LDLC decoder from Latin square to full rank, non-square matrices. We propose an efficient construction of sparse generator matrix and its matching decoder. We report preliminary experimental results which show our decoder has comparable symbol to error rate compared to the original LDLC decoder.%

%We provide experimental results for the multiuser detection problem
%showing that the LDLC decoder has superior performance relative to a recent iterative decoder.
%\footnotetext[1]{Contributed equally to this work.}
\end{abstract}

%%%%%%%%%%%%%%
% alex suggestion for outline
%%%%%%%%%%%%%%
% Here's an alternative outline I would suggest:
% (1) Intro
% (2) Lattice codes
% (3) Factor graphs and sum-product, general definitions
%    (a) Gaussian formulae, as in table 1.  It would be useful to show how to map a factor graph representation, e.g. factors like N(x_i; y_i, \sigma^2)  and  \delta( h'x = b ), into the Gaussian messages.
%    (b) Gaussian mixture formulae, a la NBP.   Show message form for product of messages.
% 
% 
% Figs 3&4 are attractive but very anecdotal.  They don't seem to be trying to convince the reader of anything.  We need to consider -- what points are we trying to make with the experiments section?  Computational efficiency, performance for decoding lattice codes, ?  These are the things the experiments should be trying to highlight.
% 
% About the "sufficient conditions for convergence" -- I say again that this is a very slippery claim.  The conditions are sufficient for the convergence of GaBP, but *not* for Gaussian mixtures (NBP).  It might be *necessary* for GaBP to converge to claim that NBP converges, but unfortunately "a sufficient condition for a necessary condition" is neither necessary nor sufficient.  I do think it's an interesting point, but you have to be careful what you say.
% 
% If you'd like to try to figure out how to re-work this on the phone, let me know & we'll set up a time.
% 
% --Alex
\IEEEpeerreviewmaketitle
\section{Introduction}
Lattice codes provide a continuous-alphabet encoding procedure, in which integer-valued information bits
are converted to positions in Euclidean space.  Motivated by the success of low-density parity check (LDPC)
codes~\cite{BibDB:Gallager}, recent work by %Sommer, Feder and Shalvi \cite{LDLC_Sommer} presented
Sommer \emph{et al.}~\cite{LDLC_Sommer} presented
low density lattice codes (LDLC).  Like LDPC codes, a LDLC code has a sparse decoding matrix
which can be decoded efficiently using an iterative message-passing algorithm defined
over a factor graph.
In the original paper, the lattice codes were limited to Latin squares,
and some theoretical results were proven for this special case.

The non-parametric belief propagation (NBP)\ algorithm is an efficient method for approximated inference on continuous graphical models. The NBP\ algorithm was originally introduced in~\cite{NBP}, but has recently been rediscovered independently in several domains, among them compressive sensing \cite{CSBP06,CSBP2009} and low density lattice decoding \cite{LDLC_Sommer}, demonstrating very good empirical performance in these systems. 

In this work, we  investigate the theoretical relations between the LDLC decoder and belief propagation, and show it is an instance of the NBP algorithm.
This understanding has both theoretical and practical consequences. From the theory point of view we provide a cleaner and more standard derivation of the LDLC update rules, from the graphical models perspective. From the practical side we propose to use the considerable body of research that exists in the NBP\ domain to allow construction of efficient decoders. 

We further propose a new family of LDLC codes as well as a new LDLC decoder based on the NBP algorithm . By utilizing sparse generator matrices rather than the sparse parity check matrices used in the original LDLC work,
we can obtain a more efficient encoder and decoder.
We introduce the theoretical foundations which are the basis of our new decoder and give preliminary experimental results which show our decoder has comparable performance to the LDLC decoder.

%We provide an efficient implementation, utilizing efficient
%algorithm for approximating Gaussian mixture products, that where investigated in the NBP context.

The structure of this paper is as follows. Section \ref{s-ldlc} overviews LDLC codes, belief propagation on factor graph and the LDLC decoder algorithm. Section III rederive the original LDLC algorithm using standard graphical models terminology, and shows it is an instance of the NBP algorithm. Section \ref{s-novel} presents a new family of LDLC codes as well as our novel decoder.  We further discuss the relation to the GaBP algorithm. In Section \ref{s-new-const} we discuss convergence and give more general sufficient conditions for convergence, under the same assumptions used in the original LDLC work. Section VI brings preliminary experimental results of evaluating our NBP decoder vs. the LDLC\ decoder. We conclude in Section VII.

\section{Background}
\label{s-ldlc}
\subsection{Lattices and low-density lattice codes} \label{lattice_codes}

An $n$-dimensional lattice $\Lambda$ is defined by a generator
matrix $G$ of size $n \times n$. The lattice consists of the discrete set of points
$x = (x_1, x_2, . . . , x_n) \in R^n$ with $x = Gb$, where $b \in Z^n$ is the set of all possible integer
vectors.

%!!! shaping region; in keeping with [x] we ignore the shaping region.

A low-density lattice code (LDLC) is a lattice with a non-singular
generator matrix $G$, for which $H = G^{-1}$ is sparse. It is
convenient to assume that $det(H) = 1/ det(G) = 1$. An $(n, d)$
regular LDLC code has an $H$ matrix with constant row and
column degree $d$. % weight?
In a latin square LDLC, the values of the
$d$ non-zero coefficients in each row and each column are some permutation
of the values $h_1, h_2, \cdots , h_d$.

%\subsection{Linear channel model} \label{iter_dec_sec}
We assume a linear channel with additive white Gaussian noise (AWGN).  For a vector of integer-valued
information $b$ the transmitted codeword is $x=Gb$, where $G$ is the LDLC encoding matrix,
and the received observation is $y=x+w$ where $w$ is a vector of i.i.d. AWGN with diagonal
covariance $\sigma^2 I$.  The decoding problem is then to estimate $b$ given the observation vector $y$;
for the AWGN channel, the MMSE estimator is
\small
\BE
 b^* =\arg \mathop{\min }\limits_{b \in \Z^n}||y - Gb||^2\,. \label{eq:ls}
\EE
\normalsize
% Rather than searching for the best value of $b$, an LDLC code is defined on the transmitted codeword $x$
% with $b$ defined implicitly via $b=Hx$.
% Constructing an iterative decoder then proceeds by creating a graph which corresponds to this code,
% finding an estimate $\hat x$ of the codeword $x$ (whose elements take on values in $\R$), and
% rounding $H\hat x$ to estimate $b$.

\subsection{Factor graphs and belief propagation}
Factor graphs provide a convenient mechanism for representing structure among random variables.
Suppose a function or distribution $p(x)$ defined on a large set of variables $x=[x_1,\ldots,x_n]$
factors into a collection of smaller functions $p(x)=\prod_s f_s(x_s)$, where each $x_s$ is a vector 
composed of a smaller subset of the $x_i$.  We represent this factorization as a bipartite graph with ``factor
nodes'' $f_s$ and ``variable nodes'' $x_i$, where the neighbors $\Gamma_s$ of $f_s$ are the variables
in $x_s$, and the neighbors of $x_i$ are the factor nodes which have $x_i$ as an argument ($f_s$ such
that $x_i$ in $x_s$).  For compactness, we use subscripts $s,t$ to indicate factor nodes and $i,j$ to 
indicate variable nodes, and will use $x$ and $x_s$ to indicate sets of variables, typically formed
into a vector whose entries are the variables $x_i$ which are in the set.

The belief propagation (BP) or sum-product algorithm~\cite{BibDB:FactorGraph} is a popular technique for estimating 
the marginal probabilities of each of the variables $x_i$.  BP follows a message-passing formulation, in which
at each iteration $\tau$, every variable passes a message (denoted $M_{is}^\tau$) to its neighboring factors, and factors to their neighboring 
variables.  These messages are given by the general form,
\small
\begin{align} %\label{eq:bpmsg}
 M^{\tau+1}_{is}(x_i) &= f_i(x_i) \prod_{t\in \nbr_i \setminus s} M^\tau_{ti}(x_i)\,, \nonumber \\
 M^{\tau+1}_{si}(x_i) &= \int \limits_{x_s \setminus x_i}  f_s(x_s) \prod \limits_{j \in \nbr_s \setminus i} M^\tau_{js}(x_j)dx_{s}\,. \label{eq:BP}
\end{align}
\normalsize
Here we have included a ``local factor'' $f_i(x_i)$ for each variable, to better parallel our development in the sequel.
When the variables $x_i$ take on only a finite number of values, the messages may be represented as vectors; the
resulting algorithm has proven effective in many coding applications including low-density parity check (LDPC) 
codes~\cite{BibDB:McElieceMacKayCheng}.  In keeping with our focus on continuous-alphabet codes, however, we
will focus on implementations for continuous-valued random variables.

\subsubsection{Gaussian Belief Propagation}
When the joint distribution $p(x)$ is Gaussian, $p(x)\propto \exp\{-\frac{1}{2} x^T J x + h^T x\}$, the
BP messages may also be compactly represented in the same form.  Here we use the ``information form'' of the
Gaussian distribution, ${\cal N}(x;\mu,\Sigma)={\cal N}^{-1}(h,J)$ where $J=\Sigma^{-1}$ and $h=J\mu$.
In this case, the distribution's factors can always be written in a pairwise form, so that each function involves at most two
variables $x_i$, $x_j$, with $f_{ij}(x_i,x_j)=\exp\{- J_{ij} x_i x_j\}$, $j\ne i$, and 
$f_i(x_i)= \exp\{-\frac{1}{2} J_{ii} x_i^2 + h_i x_i\}$.

Gaussian BP (GaBP) then has messages that are also conveniently represented as information-form Gaussian 
distributions.  If $s$ refers to factor $f_{ij}$, we have
\small
\begin{align}
 M^{\tau+1}_{is}(x_i) &= {\cal N}^{-1}(\beta_{i\setminus j},\alpha_{i\setminus j})\,, \nonumber \\
  \alpha_{i \backslash j} &= J_{ii} + \sum_{{k} \in \nbr_i \setminus j} \alpha_{ki}\,, &
  \beta_{i \backslash j} &= h_i + \sum_{k \in \nbr_i \setminus j} \beta_{ki}\,, \label{eq:GaBP1}\\
 M^{\tau+1}_{sj}(x_j) &= {\cal N}^{-1}(\beta_{ij},\alpha_{ij})\,,\nonumber \\
  \alpha_{ij} &= -J_{ij}^2 \alpha_{i \backslash j}^{-1}\,, &
  \beta_{ij} &= -J_{ij} \alpha_{i \backslash j}^{-1} \beta_{i \backslash j}\,. \label{eq:GaBP2}
\end{align} 
\normalsize
From the $\alpha$ and $\beta$ values we can compute the estimated marginal distributions,
which are Gaussian with mean $\hat{\mu}_{i}= \hat{K}_i (h_i + \sum_{k \in \nbr_i} \beta_{ki})$
and variance $\hat{K}_{i}=(J_{ii} + \sum_{{k} \in \nbr_i} \alpha_{ki})^{-1}$.
It is known that if GaBP converges, it results in the exact
MAP estimate $x^*$, although the variance estimates $\hat{K}_i$ 
computed by GaBP are only approximations to the correct
variances \cite{BibDB:Weiss01Correctness}.

\subsubsection{Nonparametric belief propagation}
In more general continuous-valued systems, the messages do not have a simple closed form and must be
approximated.  Nonparametric belief propagation, or NBP, extends the popular class of particle filtering
algorithms, which assume variables are related by a Markov chain, to general graphs.  In NBP, messages are
represented by collections of weighted samples, smoothed by a Gaussian shape--in other words, Gaussian
mixtures.

NBP follows the same message update structure of~\eqref{eq:BP}.  Notably, when the factors are all
either Gaussian or mixtures of Gaussians, the messages will remain mixtures of Gaussians as well,
since the product or marginalization of any mixture of Gaussians is also a mixture of Gaussians~\cite{NBP}.
However, the product of $d$ Gaussian mixtures, each with $N$ components, produces a mixture of $N^d$
components; thus every message product creates an exponential increase in the size of the mixture.
For this reason, one must approximate the mixture in some way.  NBP typically relies on a stochastic
sampling process to preserve only high-likelihood components, and a number of sampling algorithms have 
been designed to ensure that this process is as efficient as possible~\cite{NBP2,briers05,rudoy07}.
One may also apply various deterministic algorithms to reduce the number of Gaussian mixture components~\cite{KDE};
for example, in~\cite{Ihler-JSAC,Ihler-SSP}, an $O(N)$ greedy algorithm (where $N$ is the number of 
components before reduction) is used to trade off representation 
size with approximation error under various measures.

\subsection{LDLC decoder}

The LDLC decoding algorithm is also described as a message-passing algorithm defined on a factor 
graph\cite{BibDB:FactorGraph}, whose factors represent the
information and constraints on $x$ arising from our knowledge of $y$ and the fact that $b$ is integer-valued.
Here, we rewrite the LDLC decoder update rules in the more standard graphical models notation. The factor graph used is a bipartite graph with variables nodes $\{x_i\}$, representing each element of the vector $x$,
and factor nodes $\{f_i, g_s\}$ corresponding to functions
\small
\begin{align*}
 f_i(x_i) &= \mathcal{N}(x_i; y_i, \sigma^2)\,, &
 g_s(x_{s}) &= \begin{cases} 1 & H_s x \in \Z \\ 0 & \mathrm{otherwise} \end{cases}\,,
\end{align*}
\normalsize
where $H_s$ is the $s^\mathrm{th}$ row of the decoding matrix $H$.
Each variable node $x_i$ is connected to those factors for which it is an argument; since $H$ is sparse,
$H_s$ has few non-zero entries, making the resulting factor graph sparse as well.  
Notice that unlike the construction of~\cite{LDLC_Sommer}, this formulation does not require that
$H$ be square, and it may have arbitrary entries, rather than being restricted to a Latin square
construction.  Sparsity is preferred, both for computational efficiency and because belief propagation
is typically more well behaved on sparse systems with sufficiently long cycles~\cite{BibDB:FactorGraph}.
We can now directly
derive the belief propagation update equations as Gaussian mixture distributions, corresponding to 
an instance of the NBP algorithm.  We suppress the iteration number $\tau$ to reduce clutter.

\newcommand{\ls}{{l}}
\newcommand{\lv}{{\mathbf l}}

\emph{Variable to factor messages.} Suppose that our factor to variable messages $M_{si}(x_i)$ are each
described by a Gaussian mixture distribution, which we will write in both the moment and information form:
\small
\begin{equation} \label{ftovInitLDLC}
M_{si}(x_i) = \sum_\ls w_{si}^\ls \N(x_i \,;\, m_{si}^\ls, \nu_{si}^\ls) = \sum_\ls  w_{si}^\ls \N^{-1}(x_i \,;\, \beta_{si}^\ls, \alpha_{si}^\ls)\,.
\end{equation}
\normalsize
Then, the variable to factor message $M_{is}(x_s)$ is given by
\small
\begin{equation} \label{vtofLDLC}
M_{is}(x_s) = \sum_l w_{is}^l \N(x_s \,;\, m_{is}^l, \nu_{is}^l) = \sum_l  w_{is}^\lv \N^{-1}(x_s \,;\, \beta_{is}^l, \alpha_{is}^l)\,,
\end{equation}
\normalsize
where $\lv$ refers to a vector of indices $[\ls_s]$ for each neighbor $s$,
\small
\begin{align} \label{vtofParamLDLC}
\alpha_{is}^{\vl} &= \sigma^{-2} + \sum_{t\in\nbr_i\setminus s} \alpha_{ti}^{\ls_t}\,,   \ \ \ \beta_{it}^\lv = y_i \sigma^{-2} + \sum_{t\in\nbr_i\setminus s} \beta_{ti}^{\ls_s}\,,   \\
w_{it}^\lv &= \frac{\N(x^*; y_i, \sigma^2) \prod w_{si}^{\ls_s} \N^{-1}(x^*; \beta_{si}^{\ls_s},\alpha_{si}^{\ls_s})}{\N^{-1}(x^*;\beta_{it}^\lv,\alpha_{it}^\lv)}\,.
\nonumber \end{align}
\normalsize
The moment parameters are then given by $\nu_{it}^\lv = (\alpha_{it}^\lv)^{-1}$, $m_{it}^\lv = \beta_{it}^\lv (\alpha_{it}^\lv)^{-1}$.
The value $x^*$ is any arbitrarily chosen point, often taken to be the mean $m_{it}^\lv$ for numerical reasons. 

\emph{Factor to variable messages.} Assume that the incoming messages are of the form~\eqref{vtofLDLC},
and note that the factor $g_s(\cdot)$ can be rewritten in a summation form, $g_s(x_s) = \sum_{b_s} \delta(H_s x = b_s)$, which
includes all possible integer values $b_s$.  
If we condition on the value of both the integer $b_s$ and the indices of the incoming messages,
again formed into a vector $\lv = [\ls_j]$ with an element for each variable $j$,
we can see that $g_s$ enforces the linear equality $H_{si}x_i = b_s - \sum H_{sj} x_j$.
Using standard Gaussian identities in the moment parameterization and summing over all possible $b_s \in \Z$ and $\lv$, we obtain
\small
\begin{align}\label{ftovLDLC}
M_{si}(x_i) = \sum_{b_s} \sum_\lv w_{si}^\lv \N(x_i \,;\, m_{si}^\lv, \nu_{si}^\lv) = \nonumber \\\sum_{b_s} \sum_\lv  w_{si}^\lv \N^{-1}(x_i \,;\, \beta_{si}^\lv, \alpha_{si}^\lv)\,,
\end{align}
\normalsize
where 
\small
\begin{align} \label{ftovParamLDLC}
\nu_{si}^\lv &= H_{si}^{-2} (\sum_{j\in\nbr_s\setminus i} H_{js}^2 \nu_{js}^{\ls_j})\,,  & \nonumber \\
m_{si}^\lv &= H_{si}^{-1} (-b_s + \sum_{j\in\nbr_s\setminus i} H_{js}m_{js}^{\ls_j})\,, & w_{si}^\lv = \prod_{j\in\nbr_s\setminus i} w_{js}^{\ls_j}\,, 
\end{align}
\normalsize
and the information parameters are given by $\alpha_{si}^\lv = (\nu_{si}^\lv)^{-1}$ and $\beta_{si}^\lv = m_{si}^\lv (\nu_{si}^\lv)^{-1}$.

Notice that~\eqref{ftovLDLC} matches the initial assumption of a Gaussian mixture given in~\eqref{ftovInitLDLC}.  At each iteration, the exact messages
remain mixtures of Gaussians, and the algorithm iteslf corresponds to an instance of NBP.  As in any NBP implementation,
we also see that the number of components is increasing at each iteration and must eventually approximate the messages using some
finite number of components.  To date the work on LDLC decoders has focused on deterministic approximations~\cite{LDLC_Sommer,LDLC_Brian,LDLC_Yona,LDLC_Brian2}, often greedy
in nature.
However, the existing literature on NBP contains a large number of deterministic and stochastic approximation algorithms~\cite{NBP2,Ihler-JSAC,KDE,briers05,rudoy07}.
These algorithms can use spatial data structures such as KD-Trees to improve efficiency and avoid the pitfalls that come with greedy optimization.

\begin{comment}
The vector $x_s$ is 
defined to include only those entries of $x$ for which $H_s$ is non-zero.
The LDLC message update rules are then given as:
%
%Following, the LDLC message update rules:\footnote{We have changed the notation of the LDLC decoder, conforming
%to the belief propagation notations, where $m_{ij}$ is a message sent from node $i$ to node $j$. }
%
\paragraph{Variable nodes}
Each message has the form of a mixture of $n$ Gaussians,
\begin{equation*}
M^\tau_{is}(x_i) = \sum_{k=1}^n c_k \N(x_i \,;\, m^{\tau,k}_{is}, v^{\tau,k}_{is})\,,
\end{equation*}
where $c_k > 0$ are the relative weights, with $\sum_k c_k = 1$, and
$m^{\tau,k}_{is}$ and $v^{\tau,k}_{is}$ are the mean and variance of the $k^\textrm{th}$
mixture component. The variable-to-factor messages are initialized to be Gaussian, with
\begin{equation}
M^0_{is}(x_i)=\N(x_i;y_i,\sigma^2)\,. \label{eq:v2c-init}
\end{equation}
In subsequent iterations, these messages will also be Gaussian mixtures, given by
\begin{align} \label{v2c}
\tfrac{1}{v_{is}^{\tau,k}} = \tfrac{1}{\sigma^2} + \sum_{t \in \nbr_i \backslash s} \tfrac{1}{v_{ti}^{\tau-1,k}}\,,
& & \tfrac{m_{is}^{\tau,k}}{v_{is}^{\tau,k}} = \tfrac{y_i}{\sigma^2} + \sum_{t \in \nbr_i \backslash s} \tfrac{m_{ti}^{\tau-1,k}}{v_{ti}^{\tau-1,k}}\,.
\end{align}

\paragraph{Factor nodes}
The factor-to-variable message
\BE \label{eq:periodic}
M^\tau_{si}(x_i) = \sum_{b=-\infty}^\infty Q(x_i)
\EE
\BE
Q(x_i) = \sum_k c_k \N(x_i\,;\,\mu^{}_{si}(b),v^{\tau,k}_{si})
\EE
where
\begin{align}
v_{si}  = H^{2}_{si} (\sum_{k \in \nbr_s \backslash i}H^2_{ks}v_{ks})^{-1}\,.
& &
\mu_{si}(b)v_{si}   = H_{si}^{-1} (b - \sum_{k \in \nbr_s \setminus i} H_{sk} m_{ks})\,, 
 \label{c2v}
\end{align}

% !!! need component indices of input distributions
%
%Eq. \ref{eq:periodic} implements a periodic extension operator, since given an input Gaussian mixture it
%shifts it $b$, creating an infinite number of copies of the mixture.

\end{comment}

\emph{Estimating the codewords.}
The original codeword $x$ can be estimated using its belief, an approximation to its marginal distribution
given the constraints and observations:
\small
\begin{equation}
B_i(x_i) = f_i(x_i) \prod_{s \in \nbr_i} M_{si}(x_i)\,.
\end{equation}
\normalsize
The value of each $x_i$ can then be estimated as either the mean or mode of the belief, e.g.,
$x_i^* = \arg \max B_{i}(x_i)$, and the integer-valued information vector estimated as
$b^* = \mathop{\mathrm{round}}(Hx^*)$. 

%  To do so,
%the variable nodes compute $x_i^* = \arg \max M^\tau_{i}(x_i) $, where
%$M^\tau_{i}(x_i) = f_i(x_i) \prod_{t\in \nbr_i} M^\tau_{ti}(x_i)$.  The estimated integer-valued
%information is then $b^* = \mathop{\mathrm{round}}(Hx^*)$.

%%
%For a node with $d$ neighbors and incoming messages of $n$ components each, this produces a mixture model with $n^d$
%components.

%The variable-check node messages are gaussian mixtures of the type
%\[ q_{is}(z) \propto \mathcal{N}(z; m_{is},v_{is})\,, \]
%where the following update rule computes the products of every element from the incoming messages which contains mixtures.
%\begin{equation*}
%\frac{1}{v_{is}} = \frac{1}{\sigma^2} + \displaystyle\sum_{t \in \nbr_i \backslash s} \frac{1}{v_{ti}}\,, \ \ \ \ \ \ \frac{m_{is}}{v_{is}} = \frac{y_i}{\sigma^2} + \displaystyle\sum_{t \in \nbr_i \backslash s} \frac{m_{ti}}{v_{ti}}\,. \label{v2c}
%\end{equation*}
%Note that assuming node $i$ has $k$ neighbors, and each one sends him a mixture of size $p$ elements, the output mixture $q_{ij}$ will contain $N = k^p$ elements.
%Finally, the variable nodes compute $ x_i = \mathop{\max} q_{i*}(z) $, where $q_{i*}(z)$ is the product of all incoming
%mixtures in node $i$, and the result is $\mathop{\mathrm{round}}(Hx)$.

\section{A Pairwise Construction of the LDLC\ decoder}
\label{s-GaBP}
Before introducing our novel lattice code construction, we demonstrate that the LDLC decoder can be
equivalently constructed using a \emph{pairwise} graphical model.  This construction will have 
important consequences when relating the LDLC decoder to Gaussian belief propagation (Section~\ref{NBPandGaBP}) and
understanding convergence properties (Section~\ref{s-new-const}).

%Prior to introducing our novel construction, we would like to review again the LDLC decoder algorithm and show it is an instance of the NBP algorithm. Following we provide a cleaner and simpler derivation of the LDLC decoder algorithm. This derivation will be later used in Section IV\ for deriving additional convergence results.

\begin{thm}
\label{LDLCisNBP}
The LDLC decoder algorithm is an instance of the NBP algorithm executed on the following pairwise graphical model. Denote the number LDLC  variable nodes as $n$ and the number of check nodes as $k$\footnote{Our construction extends the square parity check matrix assumption to  the general case.}. We construct a new graphical model with $n+k$ variables, $X = (x_1, \cdots, x_{n+k}$) as follows. To match the LDLC notation we use the index letters $i,j,..$ to denote variables $1,...,n$ and the letters $s,t,...$ to denote new variables $n+1,...,n+k$ which will take the place of the check node factors in the original formulation. We further define the self and edge potentials: \small
\begin{align}
\psi_i(x_i) \propto \N(x_i; y_i, \sigma^2)\,, &
\ \ \  \psi_s(x_{s}) \triangleq \sum_{b_{s}=-\infty}^{\infty}\N(x_s;b_{s},0)\,, \nonumber \\ \psi_{i,s}(x_i,x_s) &\triangleq \exp(-x_iH_{is}x_s)\,. \label{LDLC_pots}
\end{align}
\normalsize
\end{thm}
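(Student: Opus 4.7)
The plan is to establish the equivalence by running the generic NBP message-passing equations on the pairwise graph defined by (\ref{LDLC_pots}) and verifying, by induction on the iteration, that the resulting messages reproduce the LDLC updates (\ref{vtofParamLDLC})--(\ref{ftovParamLDLC}) derived earlier. Because the pairwise graph is bipartite between the original variables $\{x_i\}$ and the auxiliary check variables $\{x_s\}$, each full NBP round naturally splits into two half-passes: a message from $x_i$ across $\psi_{i,s}$ to $x_s$, followed by a message from $x_s$ back across $\psi_{i,s}$ to $x_i$.

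For the $x_i \to x_s$ half-pass the NBP rule gives $M_{i\to(i,s)}(x_i) = \psi_i(x_i)\prod_{t \in \nbr_i \setminus s} M_{(i,t)\to i}(x_i)$, a product of a Gaussian local factor and (by the inductive hypothesis) Gaussian mixtures, which expands into a new Gaussian mixture indexed by a vector $\lv=[\ls_t]$ of component labels across the other neighbors. A routine information-form calculation --- adding precisions $\alpha$ and natural parameters $\beta$ across the neighbors and reading off the mixture weight from an evaluation at a reference point $x^*$ --- recovers (\ref{vtofParamLDLC}) exactly. For the $x_s \to x_i$ half-pass I compute
\begin{equation*}
M_{(i,s)\to i}(x_i) = \int \psi_{i,s}(x_i,x_s)\,\psi_s(x_s)\prod_{j\in\nbr_s\setminus i} M_{(j,s)\to s}(x_s)\,dx_s,
\end{equation*}
where the inner messages $M_{(j,s)\to s}(x_s) = \int \exp(-x_j H_{js} x_s) M_{j\to(j,s)}(x_j)\,dx_j$ are obtained by integrating Gaussian moment-generating identities against $\psi_{j,s}$. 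Since $\psi_s$ concentrates $x_s$ on the integer grid, the outer integral collapses to a sum over $b_s\in\Z$; matching variances, means, and weights termwise against (\ref{ftovParamLDLC}) yields the claimed equivalence with $\nu = H_{si}^{-2}\sum_{j\neq i}H_{js}^{2}\nu_{js}^{\ls_j}$, $m = H_{si}^{-1}(-b_s + \sum_{j\neq i}H_{js}m_{js}^{\ls_j})$, and $w = \prod_{j\neq i} w_{js}^{\ls_j}$.

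The main obstacle is that the intermediate objects on the auxiliary side are not proper distributions: $\psi_s$ is a sum of Dirac deltas, and $\psi_{i,s}$ is an unnormalised real exponential. The cleanest way to justify the manipulations is to observe that marginalising the $\{x_s\}$ out of the formal joint $\prod_i \psi_i(x_i)\prod_s \psi_s(x_s)\prod_{(i,s)}\psi_{i,s}(x_i,x_s)$ reproduces, via a generating-function/Poisson-summation identity applied to $\sum_{b_s\in\Z}\exp(-b_s H_s x)$, the original LDLC posterior $\prod_i \N(x_i;y_i,\sigma^2)\prod_s g_s(x)$. Once the marginals on the $\{x_i\}$ variables agree, the standard construction that converts a high-order factor into an auxiliary variable plus a collection of pairwise edges guarantees that NBP on the pairwise graph produces the same messages on the $\{x_i\}$ as NBP on the original LDLC factor graph, closing the argument.
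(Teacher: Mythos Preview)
Your high-level plan---substitute the potentials \eqref{LDLC_pots} into the pairwise BP recursion and check that the resulting messages reproduce the LDLC updates---is exactly the paper's approach. The paper carries it out with two explicit tools: a Gaussian product lemma for the aggregation step, and a Gaussian integral lemma stating that $\int \exp(-x_i H_{is} x_s)\,\N(x_i;m,v)\,dx_i$ is (formally) an information-form Gaussian in $x_s$; everything else is bookkeeping.

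The gap in your sketch is at the check-to-variable half-pass. You treat $\psi_s$ as a Dirac comb so that ``the outer integral collapses to a sum over $b_s\in\Z$.'' But once $x_s$ is pinned to an integer, the edge factor $\exp(-x_i H_{is} b_s)$ is merely a linear exponential in $x_i$, and the resulting message $M_{(i,s)\to i}(x_i)$ is a weighted sum of such exponentials---there is no variance to match against \eqref{ftovParamLDLC}. The paper does \emph{not} collapse the integral. It reads $\psi_s$ in information form as $\sum_{b_s}\exp(b_s x_s)$ (zero precision, linear tilt), keeps the $dx_s$ integral, and applies the Gaussian integral lemma formally: the product $\prod_{j\ne i}M_{js}(x_s)$ is an improper Gaussian in $x_s$ (negative precision, exactly as in the GaBP rule $\alpha_{ij}=-J_{ij}^2\alpha_{i\setminus j}^{-1}$), and the formal Gaussian identity flips this into a \emph{proper} Gaussian in $x_i$ with the parameters $\nu_{si}=H_{si}^{-2}\sum_j H_{js}^2\nu_{js}$ and $m_{si}=H_{si}^{-1}(-b_s+\sum_j H_{js}m_{js})$. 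Your Poisson-summation argument hits the same wall: $\sum_{b_s\in\Z}\exp(-b_s H_s x)$ has a real exponent, is not a tempered distribution, and does not recover the indicator $g_s(x)$. The paper sidesteps this entirely---it never claims the pairwise joint marginalises back to the LDLC posterior, only that the formal message recursions coincide, which is all the theorem asserts.
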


\begin{proof} The proof is constructed by substituting the edge and self potentials \eqref{LDLC_pots} into the belief propagation update rules. Since we are using a pairwise graphical model, we do not have two update rules from variable to factors and from factors to variables. However, to recover the LDLC update rules, we make the artificial distinction between the variable and factor nodes, where the nodes $x_i$ will be shown to be related to the variable nodes in the LDLC decoder, and the nodes $x_s$ will be shown to be related to the factor nodes in the LDLC decoder.

\paragraph{LDLC variable to factor nodes}
We start with the integral-product rule computed in the $x_i$ nodes: 
\small
\[ 
M^{}_{is}(x_s) =\int \limits_{x_i}  \psi(x_i,x_s)\psi_i(x_i) \prod_{t\in \nbr_i \setminus s} M_{ti}(x_i)dx_{i}\, \]
\normalsize
The product of a mixture of Gaussians  \small $\prod \limits_{t\in \nbr_i \setminus s} M_{ti}(x_i)$ \normalsize is itself a mixture of Gaussians, where each component in the output mixture is the product of a single Gaussians selected from each input mixture $M_{ti}(x_i)$. 
\begin{lem}[Gaussian product]\label{lem:prod} \cite[Claim 10]{phd-thesis}, \cite[Claim 2]{LDLC_Sommer} Given $p$ Gaussians $\N(m_1,v_1), \cdots, \N(m_p,v_p)$ their product is proportional to a Gaussian $\N(\bar{m},\bar{v})$ with
\small
\begin{align*} \bar{v}^{-1} =  \sum_{i=1}^p \frac{1}{v_{i}} = \sum_{i=1}^p\alpha_i\, & & 
\bar{m} =(\sum_{i=1}^pm_i/v_i)\bar{v}=\sum_{i=1}^p\beta_i\bar{v}
\end{align*} \normalsize
\end{lem}
\begin{proof}
Is given in \cite[Claim 10]{phd-thesis}.
\end{proof}
Using the  Gaussian product 
lemma the $l_s$ mixture component in the message from variable node $i$ to factor node $s$ is a single Gaussian given by
\scriptsize
\[ M^{l_s}_{is}(x_s) =\int \limits_{x_i}  \psi_{is}(x_i,x_s)\big(\psi_i(x_i) \prod_{t\in \nbr_i \setminus s} M^\tau_{ti}(x_i)\Big)dx_{i}= \]
\vspace{-2mm}
\[ \lint{x_i}  \psi_{is}(x_i,x_s)\big(\psi_i(x_i) \exp\{-\tfrac{1}{2} x_i^2(\sum_{t\in \nbr_i \setminus s}\alpha_{ti}^{l_{s}})+x_i(\!\!\sum_{t \in \nbr_i \backslash s} \beta_{ti}^{l_s}\,)\}\big) dx_{i}  = \]
\vspace{-2mm}
 \[  \lint{x_i}  \psi_{is}(x_i,x_s)\Big(\exp(-\tfrac{1}{2} x_{i}^2\sigma^{-2}+x_iy_i\sigma^{-2})\cdot  \] \vspace{-3mm}\[ \cdot\lexp{x_i^2(\sum_{t\in \nbr_i \setminus s}\alpha_{ti}^{l_{s}})+x_i(  \sum_{t \in \nbr_i \backslash s} \beta_{ti}^{l_s}\,)}\Big)dx_{i}  =\]
\vspace{-2mm}
 \[  \lint{x_i}  \psi_{is}(x_i,x_s)\Big( \lexp{x_i^2(\sigma^{-2} +\hspace{-3mm} \sum_{t\in \nbr_i \setminus s}\!\!\!\alpha_{ti}^{l_{s}})+x_{i}(y_i\sigma^{-2}+  \hspace{-3mm}\sum_{t \in \nbr_i \backslash s} \beta_{ti}^{l_s}\,)\,}\Big)dx_{i} \,. \]
\normalsize
We got a formulation which is equivalent to LDLC variable nodes update rule given in \eqref{vtofParamLDLC}.
Now we use the following lemma for computing the integral:

\begin{lem}[Gaussian integral]
\label{lem:int}
Given a (one dimensional)\ Gaussian $\phi_i(x_i) \propto \N(x_i;m,v) $, the
integral $\int \limits_{x_i}  \psi_{i,s}(x_i,x_s)\phi_i(x_i)dx_{i}$, where is a (two dimensional) Gaussian $\psi_{i,s}(x_i,x_s) \triangleq \exp(-\tfrac{}{}x_iH_{is}x_s)$\, is proportional to a (one dimensional)\ Gaussian $\N^{{-1}}(H_{is}m, H_{is}^2 v).$  \end{lem}
\begin{proof}
\small
\[     \int \limits_{x_{i}} \psi_{ij}(x_i,x_j) \phi_{i}(x_i)dx_{i}\ \ \  \] \vspace{-4mm} \[ \propto   \int \limits_{x_{i}} \exp{(-x_iH_{is}x_s)}\lexp{(x_i- m)^2/v}dx_{i}= \]  \vspace{-4mm}
\[ =\lint{x_{i}}\exp{\Big((-\tfrac{1}{2}x_i^2/v)+(m/v-H_{is}x_{s})x_{i}\Big)}dx_{i} \ \ \ \]\vspace{-4mm} \[ \propto \ \ \ \exp{((m/v-H_{is}x_s)^{2}/(-\tfrac{2}{v}))}\,, \]
\ns
where the last transition was obtained by using the Gaussian
integral:
%\end{proof}
%\end{proof}
\BE\label{eq_GaussianIntegral}
\int \limits_{-\infty}^{\infty}\exp{(-ax^{2}+bx)}dx=\sqrt{\pi/a}\exp{(b^{2}/4a)}. \nonumber \EE
\[  \exp{((m/v-H_{is}x_s)^{2}/(-\tfrac{2}{v}))} = \lexp{(v(m/v-H_{is}x_s)^{2})}= \] 
\[ =\lexp{(H_{is}^2v)x_s^2+(H_{is}m)x_s-\tfrac{1}{2}v(m/v)^2} \]\[\propto \lexp{(H_{is}^2 v)x_s^2+(H_{is}m)x_s}\,.\]
\end{proof}
Using the results of Lemma \ref{lem:int} we get
that the sent message between variable node to a factor node is a mixture of Gaussians, where each Gaussian component $k$ is given by 
\small
\[
M^{\lv}_{is}(x_s) = \N^{-1}(x_s ; H_{is}m_{is}^{l_s},H^2_{is}v_{is}^{l_s} ) \,. \]
\normalsize
 Note that in the LDLC terminology the integral operation as defined in Lemma \ref{lem:int} is called stretching. In the LDLC algorithm, the stretching is computed by the factor node as it receives the message from the variable node. In NBP, the integral operation is computed at the variable nodes.
\paragraph*{LDLC\ Factors to variable nodes}We start again with the BP\ integral-product rule and handle the $x_s$ variables
computed at the factor nodes. \small \vspace{-2mm}\[
M^{}_{si}(x_i) = \int \limits_{x_s}  \psi_{is}(x_i,x_s)\psi_s(x_s) \prod \limits_{j\in \nbr_s \setminus i} M_{js}(x_j)\,dx_{s}. \]\normalsize
Note that the product $\prod \limits_{j\in \nbr_s \setminus i} M^\tau_{js}(x_j)$ , is a mixture of Gaussians, where the $k$-th component is computed by selecting a single Gaussian from each message $M^\tau_{js}$ from the set $j \in \nbr_s \setminus i$ and applying the  product lemma (Lemma \ref{lem:prod}). We get 
\small
\BE \lint{x_{s}} \psi_{is}(x_i,x_s)\Big(\psi_s(x_s) \exp \{-\tfrac{1}{2}{x_s^2(\sum_{k \in \nbr_s \backslash i}H^2_{ks}v_{ks}^{l_i}})+\nonumber \EE \vspace{-3mm} \BE + x_s(\sum_{k \in \nbr_s \setminus i} H_{ks} m_{ks}^{l_i} )\,\}\Big)dx_{s}  \label{eq:bs}
\EE
\normalsize
We continue by computing the product with the self potential $\psi_s(x_s)$ to get
\scriptsize
\BE = \lint{x_{s}}  \psi_{is}(x_i,x_s)\big(\!\!\!\!\sum_{b_{s}=-\infty}^{\infty}\!\!\!\!\!\exp(b_{s}x_s) \exp\{-\tfrac{1}{2}x_s^2(\!\!\!\!\sum_{k \in \nbr_s \backslash i}\!\!\!\!\!H^2_{ks}v_{ks}^{l_{i}})+\nonumber \EE \vspace{-6mm}\BE+x_s(  \!\!\!\!\sum_{k \in \nbr_s \setminus i} \!\!\!\!H_{ks} m_{ks}^{l_i} )\,\}\big)dx_{s}= \nonumber \EE
\vspace{-3mm}
\BE =\!\!\!\!\!\sum_{b_{s}=-\infty}^{\infty} \lint{x_{s}}  \psi_{is}(x_i,x_s)\big(\!\exp(b_{s}x_s) \exp\{-\tfrac{1}{2}x_s^2(\!\!\!\sum_{k \in \nbr_s \backslash i}\!\!\!\!H^2_{ks}v_{ks}^{l_i})+\nonumber \EE \vspace{-6mm}\BE +x_s(\!\!\!  \sum_{k \in \nbr_s \setminus i} \!\!\!\!\!H_{ks} m_{ks}^{l_i} )\,\}\big)dx_{s}=  \nonumber
\EE\vspace{-4mm}
\vspace{-2mm}
\BE =\!\!\!\!\sum_{b_{s}=-\infty}^{\infty}\lint{x_{s}}  \psi_{is}(x_i,x_s)\big( \exp\{-\tfrac{1}{2} x_s^2(\!\!\!\!\sum_{k \in \nbr_s \backslash i}\!\!\!\!\!H^2_{ks}v_{ks}^{l_i})+\nonumber \EE \vspace{-4mm} \BE\ \!\!x_s(b_{s}+ \!\!\!\!\!\sum_{k \in \nbr_s \setminus i}\!\!\!\! H_{ks} m_{ks}^{l_i})\}\big)dx_{s}=  \nonumber
\EE\vspace{-4mm}
\hspace{-2pt}
\BE =\!\!\sum_{b_{s}=\infty}^{-\infty} \lint{x_{s}}  \psi_{is}(x_i,x_s)\big( \exp\{-\tfrac{1}{2}x_s^2(\!\!\!\!\sum_{k \in \nbr_s \backslash i}\!\!\!\!\!H^2_{ks}v_{ks}^{l_i})+ \nonumber \EE \vspace{-4mm} \BE +x_s(-b_{s}+\!\!\!\!  \sum_{k \in \nbr_s \setminus i}\!\!\!\!\!\! H_{ks} m_{ks}^{l_i} )\,\}\big)dx_{s}\,. \nonumber
\EE \vspace{-2mm}
\normalsize
Finally we use Lemma \ref{lem:int} to compute the integral and get
\scriptsize \BE =\sum_{b_{s}=\infty}^{-\infty}  \exp\{-tfrac{1}{2}x_s^2H_{si}^{2}(\!\!\sum_{k \in \nbr_s \backslash i}H^{2}_{ks}v_{ks}^{l_i})^{-1}+\nonumber \EE \vspace{-3mm} \BE +x_sH_{si}(\sum_{k \in \nbr_s \backslash i}H^{2}_{ks}v_{ks}^{l_i})^{-1}(-b_{s}+  \sum_{k \in \nbr_s \setminus i} H_{ks} m_{ks}^{l_i} )\,\}dx_{s} \,. \nonumber
\EE\normalsize
It is easy to verify this formulation is identical to the LDLC update rules \eqref{ftovParamLDLC}. % \begin{lem}
\end{proof}

\section{Using Sparse Generator Matrices}
\label{s-novel}

We propose a new family of LDLC\ codes where the generator matrix $G$ is sparse, in contrast to the original LDLC codes where the parity check matrix $H$ is sparse. Table \ref{tab1} outlines the properties of our proposed decoder.
Our decoder is designed to be more efficient than the original LDLC decoder, since as we will soon show, both encoding,   initialization and final operations are more efficient in the NBP decoder.
We are currently in the process of fully evaluating our decoder performance relative to the LDLC decoder. Initial results are reported in Section VI.  
%Further discussion on the applicability of GaBP to non-square
%matrices is found in \cite{ISIT1,ISIT2}.

\subsection{The NBP decoder}
We use an undirected
bipartite graph, with variables nodes $\{b_i\}$, representing each element of the vector $b$, and observation nodes $\{z_i\}$ for each element of the observation vector $y$.
We define the self potentials $\psi_i(z_i)$ and $\psi_s(b_s)$ as follows: \small
\begin{align}
 \psi_i(z_i) & \propto \N(z_i;y_i, \sigma^2)\,, &
 \psi_s(b_{s}) &= \begin{cases} 1 & b_{s} \in \Z \\ 0 & \mathrm{otherwise} \end{cases}\,, \label{eq:delta}
\end{align}
\normalsize
and the edge potentials:
\small
\[
 \psi_{i,s}(z_i, b_s) \triangleq \exp(-z_iG_{is}b_s)\,. 
\]
\normalsize
Each variable node $b_s$ is connected to the observation nodes as defined by the encoding matrix $G. $ Since $G$ is sparse, the resulting bipartite graph sparse as well.
\textcolor{red}{} As with LDPC decoders~\cite{BibDB:McElieceMacKayCheng}, the belief propagation or sum-product algorithm~\cite{BibDB:BookPearl,BibDB:FactorGraph}
provides a powerful approximate decoding scheme.
%  The purpose of belief propagation is to estimate the posterior marginal distributions $p(x_i|y)$

For computing the MAP assignment of the transmitted vector $b$ using non-parametric belief propagation we perform
the following relaxation, which is one of the main novel contributions of this paper.
Recall that in the original problem, $b$ are only allowed to be integers.  We relax the function $\psi_s(x_s)$ from a delta function to a mixture of  Gaussians centered around integers.\small \[ \psi^{relax}_s(b_{s}) \propto \sum_{i \in \Z} \mathcal{N}(i, v) \,.\]  \normalsize
\begin{figure}
\centering{
\includegraphics[scale=0.4]{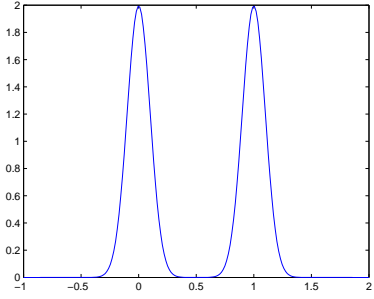}
\label{fig:self_pot}
\caption{The approximating function $g^{relax}_s(x)$ for the binary case.}
}
\end{figure}
The variance parameter $v$ controls the approximation quality, as $v \rightarrow 0$ the approximation quality is higher. Figure 2 plots an example relaxation of $\psi_i(b_s)$ in the binary case.
We have defined the self and edge potentials which are the input the to the NBP algorithm. Now it is possible to run the NBP\ algorithm using \eqref{eq:BP} and get an approximate MAP solution to \eqref{eq:ls}.
The derivation of the NBP decoder update rules is similar to the one done for the LDLC decoder, thus omitted. However, there are several important differences that should be addressed. We start by analyzing the algorithm efficiency.

\begin{table}[h!]
\centering{
\begin{tabular}{|c|c|c|}\hline
Algorithm & LDLC & NBP \\\hline\hline
Initialization operation &$G = H^{-1}$& None\\
Initialization cost &$O(n^3)$& -\\\hline
Encoding operation & $Gb$ & $Gb$ \\
Encoding cost & $O(n^2)$ & $O(nd), \ \ \ d\ll  n$ \\\hline
%Cost per node per iteration & $O(q\log(q)d)$ & $O(qd)$ \\ \hline 
Post run operation & $Hx$ & None \\
Post run cost & $O(nd)$ & - \\\hline \hline
\end{tabular}
\caption{Comparison of LDLC decoder vs. NBP decoder }
\label{tab1}
}
\end{table}
We assume that the input to our decoder is the sparse matrix $G$, there is no need in computing the encoding matrix $G=H^{-1}$ as done in the LDLC decoder. Naively this initialization takes $O(n^3)$ cost.
The encoding in our scheme is done
as in LDLC by computing the multiplication $Gb$. However, since $G$ is sparse in our case, encoding cost is $O(nd)$ where $d << n$ is the average number of  non-zeros entries on each row.
Encoding in the LDLC method is done in $O(n^2)$ since even if $H$ is sparse, $G$ is typically dense.
After convergence, the LDLC decoder multiplies by the matrix $H$ and rounds the result to get $b$. This operation costs $O(nd)$ where $d$ is the average number of non-zero entries in $H$. In contrast, in the NBP decoder, $b$ is computed directly in the variable nodes.

Besides of efficiency, there are several inherent differences between the two algorithms. Summary of the differences is given in Table \ref{tab:diff}. \ We use a standard formulation of BP
using pairwise potentials form, which means there is a single update rule, and not two update rules from left to right and right to left. We have shown that the convolution operation in the LDLC decoder relates to product step of the BP\ algorithm. The stretch/unstrech operations in the LDLC decoder are implemented using the integral step of the BP algorithm. The periodic extension operation in the LDLC decoder is incorporated into our decoder algorithm using the self potentials.\begin{table}
\scriptsize
\centering{
\begin{tabular}{|c|c|c|}\hline
Algorithm &\ LDLC decoder &\ NBP decoder\\ \hline \hline
Update rules & Two& One\\\hline
Sparsity assumption & Decoding mat. H& Encoding mat. G\\\hline
Algorithm derivation & Custom & Standard NBP \\ \hline
Graphical model & Factor graph & Pairwise potentials\\ \hline
Related Operations &\ Stretch/Unstretch &\ Integral \\ 
&\ Convolution &\ product\\
&\ periodic extension &\ product \\ \hline
\end{tabular} 
\caption{Inheret differences between LDLC and NBP decoders}
\label{tab:diff}
}
\end{table}
\normalsize
\vspace{-3mm}
\subsection{The relation of the NBP decoder to GaBP} \label{NBPandGaBP}
%\subsection{NBP decoder}
In this section we show that simplified version of the NBP decoder coincides with the GaBP algorithm. 
The simplified version is obtained, when instead of using our proposed Gaussian mixture prior, we initialize the NBP algorithm with a prior composed of a single Gaussian.\begin{thm} \label{NBPisGaBP}
By initializing  $\psi_s(b_{s}) \sim \N(0,1)$ to be a (single) Gaussian the NBP\ decoder update rules are identical to update rules of the GaBP algorithm.
\end{thm}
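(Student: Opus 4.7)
The plan is to reuse the pairwise derivation already carried out in the proof of Theorem \ref{LDLCisNBP}, but with the self-potential $\psi_s$ collapsed from an integer-indexed Gaussian mixture to the single Gaussian $\N(0,1)$. Under this substitution the incoming messages at every node are single Gaussians (by induction on $\tau$, since the product of Gaussians and the integral against an $\exp(-x_i H_{is} x_s)$ edge potential both preserve the single-Gaussian form via Lemmas \ref{lem:prod} and \ref{lem:int}). Consequently the mixture index $\lv$ and the infinite sum over $b_s$ in \eqref{ftovLDLC} both disappear, and the derivation collapses term-by-term onto the information-form GaBP recursions \eqref{eq:GaBP1}--\eqref{eq:GaBP2}.

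Concretely, I would first rerun the ``LDLC variable to factor'' half of the proof of Theorem \ref{LDLCisNBP} with $\psi_s \sim \N(0,1)$. Taking the product of $\psi_i(x_i)\propto \N(x_i;y_i,\sigma^2)$ with the incoming single-Gaussian messages $M_{ti}(x_i) \propto \N^{-1}(x_i;\beta_{ti},\alpha_{ti})$ and applying Lemma \ref{lem:prod} gives a single Gaussian with information parameters $\sigma^{-2}+\sum_{t\in\nbr_i\setminus s}\alpha_{ti}$ and $y_i\sigma^{-2}+\sum_{t\in\nbr_i\setminus s}\beta_{ti}$. Identifying $J_{ii}\leftrightarrow \sigma^{-2}$ (and $J_{ss}\leftrightarrow 1$ on the $b_s$-nodes) and $h_i\leftrightarrow y_i\sigma^{-2}$ (and $h_s\leftrightarrow 0$), this is exactly the $(\beta_{i\setminus j},\alpha_{i\setminus j})$ update \eqref{eq:GaBP1}.

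Next I would redo the ``LDLC factor to variable'' half with the same substitution. With no integer sum and no mixture, we are left with a single integral
\[
\int_{x_s} \exp(-x_i H_{is} x_s)\,\N^{-1}\!\big(x_s\,;\,\beta_{i\setminus j},\alpha_{i\setminus j}\big)\,dx_s ,
\]
which by Lemma \ref{lem:int} is proportional to a Gaussian in $x_i$ with information parameters
$\alpha_{is}=-H_{is}^{2}\,\alpha_{i\setminus j}^{-1}$ and $\beta_{is}=-H_{is}\,\alpha_{i\setminus j}^{-1}\beta_{i\setminus j}$, once one tracks the sign of the Gaussian-integral completion-of-square carefully. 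Setting $J_{ij}\leftrightarrow H_{is}$ (equivalently $G_{is}$ under the generator-matrix construction) this is exactly \eqref{eq:GaBP2}.

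The main obstacle, and where I would spend the most care, is precisely the sign bookkeeping between the edge potential $\exp(-x_i H_{is} x_s)$ used in \eqref{LDLC_pots} and the sign convention of the GaBP messages in \eqref{eq:GaBP2}, where the factor-to-variable precision carries a minus sign so that messages are improper Gaussians. The computation inside the proof of Lemma \ref{lem:int} completes the square as $-\tfrac{1}{2v}(m - vH_{is}x_s)^2$, and the cross-term coefficient must be tracked through that identity to recover the $-J_{ij}^2\alpha_{i\setminus j}^{-1}$ factor. Once this identification is done, the NBP messages under a single-Gaussian initialization coincide node-for-node with the GaBP messages on the pairwise Gaussian MRF whose information matrix has diagonal blocks $\mathrm{diag}(\sigma^{-2}\mI,\mI)$, potential vector $(\sigma^{-2}\vy,\0)$, and off-diagonal entries $H_{is}$, completing the proof.
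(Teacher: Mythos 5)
Your proposal is correct and follows essentially the same route as the paper's own proof: first an induction showing that single-Gaussian self potentials keep every message a single Gaussian (the paper states this as a standalone lemma), then a term-by-term identification of the product step with \eqref{eq:GaBP1} and the integral/stretch step with \eqref{eq:GaBP2} under the substitutions $J_{is} = G_{is}$, $J_{ii}=\sigma^{-2}$ (resp.\ $1$), $h_i = \sigma^{-2}y_i$ (resp.\ $0$), arriving at the same block information matrix. Your explicit flagging of the sign bookkeeping in Lemma \ref{lem:int} is a fair point of care but does not change the argument.
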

\begin{lem}
By initializing  $\psi_s(x_{s})$ to be a (single) Gaussian the messages of the NBP decoder are single Gaussians.\end{lem}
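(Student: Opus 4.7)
The plan is to prove this by induction on the BP iteration number $\tau$, exploiting the fact that the Gaussian mixture structure of ordinary NBP messages arose \emph{solely} from the periodic extension $\psi_s(b_s)=\sum_{b\in\Z}\N(b,v)$ encoding integer constraints. Once that self-potential is collapsed to a single Gaussian $\N(0,1)$, no mechanism for producing additional mixture components remains, because the only other operations performed on messages are Gaussian products and Gaussian integrals of the form treated in Lemmas~\ref{lem:prod} and~\ref{lem:int}.

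For the base case I would take the initialization of the variable-to-factor messages as uniform (or equivalently as $\N(0,\infty)$, a single improper Gaussian), and note that the self-potentials $\psi_i(z_i)\propto\N(z_i;y_i,\sigma^2)$ and (by hypothesis of the lemma) $\psi_s(b_s)=\N(b_s;0,1)$ are both single Gaussians. For the inductive step, assume every message $M_{ti}(x_i)$ and $M_{js}(x_s)$ at iteration $\tau$ is a single Gaussian. At a $b_s$ node, the update
\[
M_{si}(z_i)=\int_{b_s}\psi_{is}(z_i,b_s)\,\psi_s(b_s)\prod_{j\in\nbr_s\setminus i}M_{js}(b_s)\,db_s
\]
is the product of finitely many single Gaussians in $b_s$ (yielding a single Gaussian by Lemma~\ref{lem:prod}) followed by the integral $\int\exp(-z_iG_{is}b_s)\phi(b_s)\,db_s$, which by Lemma~\ref{lem:int} is again a single Gaussian in $z_i$. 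The symmetric update at a $z_i$ node is handled identically, since $\psi_i(z_i)$ is itself a single Gaussian and the same two lemmas apply. Thus the inductive property is preserved.

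The main subtlety — and essentially the only substantive part of the argument — is to observe precisely where mixtures were introduced in the proof of Theorem~\ref{LDLCisNBP}: in the derivation of~\eqref{eq:bs} and the subsequent display, the expansion $\psi_s(x_s)=\sum_{b_s=-\infty}^\infty\exp(b_s x_s)$ (viewed as a limit of mixtures of Gaussians centered at integers) produced one output Gaussian per integer $b_s$, and the incoming product over $\nbr_s\setminus i$ produced one component per index assignment $\lv$. Replacing $\psi_s$ with a single Gaussian eliminates the outer sum, and the inductive hypothesis that incoming messages carry a single component eliminates the sum over $\lv$. Hence every factor-to-variable message reduces to a single Gaussian, and by symmetry so does every variable-to-factor message. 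This establishes the lemma and sets up the subsequent identification of the parameter recursions with those of GaBP in Theorem~\ref{NBPisGaBP}.
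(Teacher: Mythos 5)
Your proof is correct and follows essentially the same route as the paper's: the paper likewise argues that with single-Gaussian self-potentials every message stays a single Gaussian because the product and integral operations (Lemmas~\ref{lem:prod} and~\ref{lem:int}) each map single Gaussians to single Gaussians. Your version merely makes the implicit induction over iterations explicit and pinpoints that the mixture components in the general decoder arise solely from the integer sum in $\psi_s$, which is a helpful but not substantively different elaboration.
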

\begin{proof}
Assume both the self potentials $\psi_s(b_{s}), \psi_{i}(z_i) $ are initialized to a single Gaussian, every message of the NBP decoder algorithm will remain a Gaussian. This is because the product \eqref{eq:GaBP1} of single Gaussians is a single Gaussian, the integral and   \eqref{eq:GaBP2} of single Gaussians produce a single Gaussian as well.
\end{proof}
%The proof to theorem 4 is given at the technical report version of this %paper \cite{TechReport}.
 Now we are able to prove Theorem 4:\\ \begin{proof}We start writing the update rules of the variable nodes. We initialize the self potentials of the variable nodes $\psi_i(z_i) = \N(z_i;y_i, \sigma^2)\,$,
   Now we substitute, using the product lemma and Lemma \ref{lem:int}.
 \scriptsize
 \[ M_{is}(b_{s}) = \int \limits_{z_i} \psi_{i,s}(z_i,b_s)\Big( \psi_i(z_i)\prod \limits_{t\in \nbr_i \setminus s} M_{ti}(z_i)\Big)dz_i = \] \vspace{-3mm}
 \[\lint{z_i} \psi_{i,s}(z_i,b_s)\big( \exp (-\tfrac{1}{2}z_i^{2}\sigma^{-2}+y_iz_i\sigma^{-2})\!\!\!\prod \limits_{t\in \nbr_i \setminus s}\!\!\!  \exp (-\tfrac{1}{2}z_i^2\alpha_{ti}+z_i\beta_{ti})\big)dz_{i} \]\vspace{-2mm}
 \[\lint{z_i} \psi_{i,s}(z_i,b_s)\big( \exp(-\tfrac{1}{2}z_i^2(\sigma^{-2} + \sum_{t \in \nbr_i \setminus s} \alpha_{ti})+z_{i}(\sigma^{-2}y_i+\!\!\!\sum_{t \in \nbr_i \setminus s} \beta_{ti})\big)dz_{i} = \]\vspace{-2mm}
  \[\propto \exp\Big(-\tfrac{1}{2}z_i^2G^2_{is}(\sigma^{-2} +\sum_{t \in \nbr_i \setminus s} \alpha_{ti})^{-1}+\] \vspace{-2mm} \[z_{i}G_{is}(\sigma^{-2} +\!\!\sum_{t \in \nbr_i \setminus s} \alpha_{ti})^{-1}(\sigma^{-2}y_i+\sum_{t \in \nbr_i \setminus s} \beta_{ti})\Big)\]
 \normalsize
 Now we get GaBP update rules by substituting $J_{ii} \triangleq \sigma^{-2}, J_{is} \triangleq G_{is}, h_s \triangleq \sigma^{-2}y_i:$
\small
 \begin{align*}
 {\alpha_{is} = -J_{is}^2 \alpha_{i \setminus s}^{-1} = -J_{is}^2 ( J_{ii} + \sum \limits_{{t} \in \nbr_i \setminus s} \alpha_{ti})^{-1} },
 & &\nonumber \\ {\beta_{is}=-J_{is} \alpha_{i \setminus s}^{-1} \beta_{i \setminus s } =-J_{is} \Big(\alpha_{i \setminus s}^{-1}  (h_i + \sum \limits_{t \in \nbr_i \setminus s} \beta_{ti})\Big)}\,.
 \end{align*}
 \normalsize
 We continue expanding \[ M_{si}(z_{i}) = \int \limits_{b_s} \psi_{i,s}(z_i,b_s)\Big( \psi_s(b_{s})\prod \limits_{k \in \nbr_s \backslash i} M^\tau_{ks}(b_{s})\Big)db_{s} \]  Similarly using the initializations \\$\psi_s(b_{s}) = \lexp{b_s^2} ,\ \psi_{i,s}(z_i, b_s) \triangleq \exp(-z_iG_{is}b_s). $ 
 \scriptsize
 \[ \lint{b_s} \psi_{i,s}(z_i,b_s)\Big( \lexp{b_s^2}\prod \limits_{k \in \nbr_s \backslash i}  \exp(-\tfrac{1}{2}b_s^{2}\alpha_{is}+b_{s}\beta_{ks})\Big)db_{s} = \] \vspace{-2mm}
 \[ \lint{b_s} \psi_{i,s}(z_i,b_s)\Big(\lexp{b_s^2(1+\!\!\!\sum \limits_{k \in \nbr_s \backslash i}\alpha_{is})+b_s(\sum \limits_{k \in \nbr_s \backslash i}\beta_{ks})^{}}\Big)db_{s}=  \]\vspace{-2mm}
  \[\lexp{b_s^2G_{is}^2(1+\!\!\!\!\sum \limits_{k \in \nbr_s \backslash i}\!\!\!\alpha_{is})^{-1}+b_sG_{is}(1+\sum \limits_{k \in \nbr_s \backslash i}\alpha_{is})^{-1}(\!\!\!\!\sum \limits_{k \in \nbr_s \backslash i}\beta_{ks})} \]
 \normalsize
  Now we get GaBP update rules by substituting $J_{ii} \triangleq1,\\  J_{si} \triangleq G_{is}, h_i \triangleq 0:$
\small
 \begin{align*}
 {\alpha_{si} = -J_{si}^2 \alpha_{s \setminus i}^{-1} = -J_{si}^2 ( J_{ii} + \sum \limits_{{k} \in \nbr s\setminus i} \alpha_{is} )^{-1} },\nonumber \\
 {\beta_{si}=-J_{si} \alpha_{s \setminus i}^{-1} \beta_{s \setminus i } =-J_{si} \Big(\alpha_{s \setminus i}^{-1}  (h_i + \sum \limits_{{k} \in \nbr s\setminus i} \beta_{ks})\Big)}\,.
 \end{align*}
\vspace{-7mm}
\normalsize

\end{proof}
Tying together the results, in the case of a single Gaussian self potential, the NBP decoder is initialized using the following inverse covariance matrix: 
 \[
J \triangleq 
\begin{pmatrix}
I & G\\
G^T &\ \diag(\sigma^{-2}) 
\end{pmatrix}
\]
%\end{proof}
We have shown that a simpler version of the NBP decoder, when the self potentials are initialized to be single Gaussians boils down to GaBP algorithm. It is known \cite{ISIT1} that the GaBP\ algorithm  solves the following least square problem $ \min_{b \in \R^{n}} \|Gb-y\| $ assuming a Gaussian prior on $b$, $p(b) \sim \N(0,1)$, we get the MMSE solution
$ b^* =(G^TG)^{-1}G^Ty. $
Note the relation to \eqref{eq:ls}. The difference is that we relax the LDLC decoder assumption that $b \in \Z^n$, with  $b \in \R^n$.  

Getting back to the NBP decoder, Figure 2 compares the two different priors used, in the NBP decoder and in the GaBP algorithm, for the bipolar case. \normalsize It is clear that the Gaussian prior assumption on $b$ is not accurate enough. In the NBP decoder, we relax the delta function \eqref{eq:delta} to a Gaussian mixture prior composed of mixtures centered around Integers. Overall, the NBP decoder algorithm can be thought of as an extension of the GaBP algorithm with more accurate priors.
\vspace{-3mm}
\begin{figure}[h!]
\centering{
%\hspace{-4mm}
\includegraphics[bb=38 182 577 610,scale=0.30,clip]{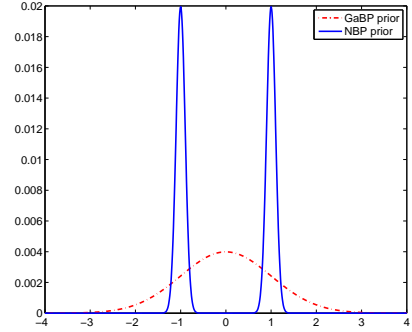}
\label{fig:self_pot}
%\vspace{-2mm}
\caption{Comparing GaBP prior to the prior we use in the NBP\ decoder for the bipolar case $(b \in \{-1,1\}$).}
}
\end{figure}

\vspace{-4.5mm}
\section{Convergence analysis}
\label{s-new-const}
%\subsection{New convergence results}
The behavior of the belief propagation algorithm has been extensively studied in the literature, resulting
in sufficient conditions for convergence in the discrete case~\cite{ihler05b} and in jointly
Gaussian models~\cite{BibDB:mjw_walksum_jmlr06}.  However, little is known about the behavior of BP in more general
continuous systems.  The original LDLC paper~\cite{LDLC_Sommer} gives some characterization of its convergence
properties under several simplifying assumptions.  Relaxing some of these assumptions and using our
pairwise factor formulation, we show that the conditions for GaBP convergence can also be applied to yield 
new convergence properties for the LDLC decoder.

The most important assumption made in the LDLC convergence analysis~\cite{LDLC_Sommer} is that the system 
converges to a set of ``consistent'' Gaussians; specifically, that at all iterations $\tau$ beyond some 
number $\tau_0$, only a \emph{single} integer $b_s$ contributes to the Gaussian mixture.  Notionally,
this corresponds to the idea that the decoded information values themselves are well resolved, and the convergence
being analyzed is with respect to the transmitted bits $x_i$.  Under this (potentially strong) assumption,
sufficient conditions are given for the decoder's convergence.
The authors also assume that $H$ consists of a Latin square in which each row and column contain some permutation
of the scalar values $h_1 \ge \ldots \ge h_d$, up to an arbitrary sign.

Four conditions are given which should all hold to ensure convergence:  \textbf{}
\begin{itemize}
\item{LDLC-I}: $\det(H) = \det(G) = 1$.
\item{LDLC-II}: $\alpha \le 1$, where $\alpha \triangleq \frac{\sum _{i=2}^d h^2_{i}}{h^2_{1}}$.
\item{LDLC-III}: The spectral radius of $\rho(F) < 1$ where $F$ is a $n \times n$ matrix
defined by:
\[
F_{k,l} =
\begin{cases}
\frac{h_{rk}}{h_rl} & \mbox{if $k\ne l$ and there exist a row $r$ of $H$
}\\ & \mbox{for which $|H_{rl}| = h_1$ and $H_{rk} \ne 0$}\\
0 & \mbox{otherwise}  \\
\end{cases}
\]
\item{LDLC-IV}: The spectral radius of $\rho(\tilde{H}) < 1$ where $\tilde{H}$ is derived from H by permuting the rows such that
the $h_1$ elements will be placed on the diagonal, dividing
each row by the appropriate diagonal element ($+h_1$ or $-h_1$),
and then nullifying the diagonal.\end{itemize}

Using our new results we are now able to provide new convergence conditions for  the LDLC decoder. 
\begin{corol}
\label{corol}
 The convergence of the LDLC decoder depends on the properties of the following matrix:
\BE J \triangleq 
\begin{pmatrix}
\0 & H\\
%\diag(\infty) & H\\
H^T &\ \diag(1/\sigma^2)  
\end{pmatrix} \label{LDLCJ} \EE
\end{corol}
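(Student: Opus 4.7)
The plan is to combine the ``consistent Gaussian'' assumption from the original LDLC convergence analysis \cite{LDLC_Sommer} with Theorems \ref{LDLCisNBP} and \ref{NBPisGaBP} in order to reduce LDLC convergence to GaBP convergence on the specific matrix $J$ in (\ref{LDLCJ}). Beyond some iteration $\tau_0$ the consistent Gaussian assumption states that in each factor-to-variable mixture only one integer $b_s$ survives, so every message becomes a single Gaussian. Theorem \ref{LDLCisNBP} then lets me replace the LDLC iteration by NBP on the pairwise model with potentials (\ref{LDLC_pots}), and once all messages are single Gaussians the proof of Theorem \ref{NBPisGaBP} shows that these NBP updates coincide with GaBP updates on the same pairwise model.

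The next step is simply to read off the inverse-covariance matrix of the jointly Gaussian distribution underlying this pairwise model. The observation self-potential $\psi_i(x_i)\propto\N(x_i;y_i,\sigma^2)$ contributes $J_{ii}=\sigma^{-2}$ and $h_i=y_i\sigma^{-2}$; the edge potential $\psi_{i,s}(x_i,x_s)=\exp(-x_iH_{is}x_s)$ contributes the off-diagonal entry $J_{is}=H_{is}$; and the check-node self-potential $\psi_s(x_s)=\sum_{b_s}\N(x_s;b_s,0)$, once restricted to the single active integer $b_s$ by the consistent assumption, behaves as the improper ``prior'' $\exp(b_s x_s)$, which is purely linear in $x_s$. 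It therefore contributes nothing to the quadratic part, i.e.\ $J_{ss}=0$, and only shifts $h_s$ by $b_s$. Assembling these blocks with the check-node variables listed first produces exactly (\ref{LDLCJ}).

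At that point I would invoke the standard fact that in GaBP the precision recursion for the $\alpha$'s in (\ref{eq:GaBP1})--(\ref{eq:GaBP2}) depends only on $J$, decoupled from the mean recursion for the $\beta$'s. Hence the particular integer $b_s$ chosen in the consistent regime can influence only the mean estimates, while convergence of the precision/variance estimates is governed by $J$ alone, and convergence of the means then follows from convergence of the variances by the usual GaBP argument. Any of the known sufficient conditions for GaBP convergence --- walk-summability, (generalized) diagonal dominance of $|J|$, or $\rho(|I-D^{-1/2}JD^{-1/2}|)<1$ --- can then be transferred verbatim to (\ref{LDLCJ}) to yield new sufficient conditions for LDLC convergence that, unlike LDLC-I--LDLC-IV, do not require $H$ to be a Latin square.

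The main obstacle I expect is the zero block $\0$ on the diagonal of $J$. Because $\psi_s$ is not a proper Gaussian prior, the resulting $J$ is indefinite and has a singular diagonal block, so one must argue that the denominators $J_{ss}+\sum_k\alpha_{ks}$ appearing in the factor-to-variable update remain nonzero along the iteration and that the standard walk-sum / spectral-radius criteria still apply. A natural way to discharge this is to interpret $J$ as the KKT matrix of the equality-constrained least-squares problem $\min_x\tfrac{1}{\sigma^2}\|y-x\|^2$ subject to $Hx=b$, for which saddle-point Gaussian BP analysis is available and yields well-definedness together with the desired convergence guarantees.
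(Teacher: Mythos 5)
Your proposal is correct and follows essentially the same route as the paper: invoke Theorem \ref{LDLCisNBP} to pass to the pairwise model, use the consistent-Gaussian assumption together with Theorem \ref{NBPisGaBP} to reduce to GaBP, and read the block entries of $J$ off the quadratic terms of the potentials (the check-node prior being linear in $x_s$, hence the zero block). Your closing concern about the singular diagonal block is well placed but is not part of the paper's proof --- the paper defers it to the discussion immediately after the corollary, where it proposes the $\diag(\epsilon)$ perturbation, row permutation, or $HH^T$ preconditioning rather than your KKT/saddle-point argument.
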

\begin{proof}
In Theorem \ref{LDLCisNBP} we have shown an equivalence between the LDLC algorithm to NBP initialized with the following potentials:  
\small
\vspace{-2mm}
\begin{align}
\psi_i(x_i) &\propto \N(x_i; y_i, \sigma^2)\,, \ \ \ \ 
 \psi_s(x_{s}) &\triangleq \sum_{b_{s}=-\infty}^{\infty}\N^{-1}(x_s;b_{s},0)\nonumber \,, \\ 
\vspace{-3mm}
&\psi_{i,s}(x_i,x_s) \triangleq \exp(x_iH_{is}x_s)\,. \label{LDLC_pots}
\end{align} 
\normalsize
We have further discussed the relation between the self potential $\psi_s(x_s)$
and the periodic extension  operation.  We have also shown in Theorem \ref{NBPisGaBP}
that if $\psi_s(x_s)$ is a \emph{single} Gaussian (equivalent to the assumption
of ``consistent'' behavior), the distribution is jointly Gaussian and rather than
NBP (with Gaussian mixture messages), we obtain GaBP (with Gaussian messages).
Convergence of the GaBP algorithm is dependent on
the inverse covariance matrix $J$ and not on the shift vector $h$. 

Now we are
able to construct the appropriate inverse covariance matrix $J$ based on the
pairwise factors given in Theorem \ref{LDLCisNBP}. The matrix $J$ is a $2\times
2$ block matrix, where the check variables $x_s$ are assigned the upper rows and
the original variables are assigned the lower rows.  The entries can be read out
from the quadratic terms of the potentials~\eqref{LDLC_pots}, with the only non-zero entries
corresponding to the pairs $(x_i, x_s)$ and self potentials $(x_i,x_i)$.

%Since $\psi_s(x_s)$ is a
%periodic shift operation, it does not change the variance, does adds zero
%variable to the current value, which means that the precision is infinite.  The
%upper left block matrix of $J,$ which matches $x_s$ is a diagonal matrix
%$\diag(\infty).  $ The inverse covariance given in the edge potentials
%$\psi_{i,s}(x_i,x_s)$ is determined by the entries of $H$. Finally, the self
%potentials $\psi_i(x_i)$ have precision of $1/\sigma^2$ which is initialized in
%a diagonal inverse variance matrix of the lower right block.  
\end{proof}

Based on Corollary \ref{corol} we can characterize the convergence of the LDLC decoder, using the sufficient conditions for convergence of GaBP. Either one of the following two conditions are sufficient for convergence:
\begin{itemize}
\item[][GaBP-I] (\emph{walk-summability} \cite{BibDB:mjw_walksum_jmlr06}) \\$\rho(I-|D^{-1/2}JD^{-1/2}|)<1$ where $D \triangleq \diag(J)$.
\item[][GaBP-II] (\emph{diagonal dominance}  \cite{BibDB:Weiss01Correctness}) $J$ is diagonally dominant (i.e. $|J_{ii}| >= \sum_{j \ne i} |J_{ij}|, \forall i$).
\end{itemize}

A further difficulty arises from the fact that the upper diagonal of \eqref{LDLCJ} is zero, which means that both [GaBP-I,II] fail to hold. There are three possible ways to overcome this.
\begin{enumerate}
\item 
Create an approximation to the original problem by setting the upper  left block matrix of \eqref{LDLCJ} to $\diag(\epsilon)$ where $\epsilon>0$ is a small constant. The accuracy of the approximation grows as $\epsilon$ is smaller. In case either of [GaBP-I,II] holds on the fixed matrix the ``consistent Gaussians'' converge into an approximated solution.
\item  In case a permutation on $J$ \eqref{LDLCJ} exists where either [GaBPI,II] hold for permuted matrix, then the ``consistent Gaussians'' convergence to the correct solution. 
\item Use preconditioning to create a new graphical model where the edge potentials  are determined by the information matrix $HH^T$, $\psi_{i,s}(x_i,x_s) \triangleq \exp(x_{i}\{HH^T\}_{is}x_{s})$ and the self potentials of the $x_i$\ nodes are $\psi_i(x_i) \triangleq \lexp{x_i^2\sigma^{-2}+x_i\{Hy\}_i}$. The proof of the correctness of the above construction is given in \cite{ISIT2}. The benefit of this preconditioning is that the main diagonal of $HH^T$ is surely non zero. If either [GaBP-I,II] holds on $HH^T$ then ``consistent Gaussians'' convergence to the correct solution. However, the matrix $HH^T$ may not be sparse anymore, thus we pay in decoder efficiency.
\end{enumerate}
Overall, we have given two sufficient conditions for convergence, under the ``consistent Gaussian'' assumption for the means and variances of the LDLC decoder. Our conditions are more general because of two reasons. First, we present a single sufficient condition instead of four that have to hold concurrently in the original LDLC work. Second, our convergence analysis does not assume Latin squares,  not even square matrices and does not assume nothing about the sparsity of $H$. This extends the applicability of the LDLC decoder to other types of codes. Note that our convergence analysis relates to the mean and variances of the Gaussian mixture messages. A remaining open problem is the convergence of the amplitudes -- the relative heights of the different consistent Gaussians.
\vspace{-4pt}
\section{Experimental results}
\vspace{-2pt}
 In this section we report preliminary experimental results of our NBP-based\ decoder. Our implementation is general and not restricted to the LDLC domain. Specifically, recent work by Baron \etal\ \cite{CSBP2009} had extensively tested our NBP implementation in the context of the related compressive sensing domain.
Our Matlab code is available on the web on \cite{MatlabGABP}.
 
 We have used a code lengths of $n=100, n=1000$, where the number of non zeros in each row and each column is $d=3$. Unlike LDLC Latin squares which are formed using a generater sequence $h_i$, we have selected the non-zeros entries of the sparse encoding matrix $G$ randomly out of $\{-1,1\}$.
 This construction further optimizes LDLC decoding, since bipolar entries avoids the integral computation (stretch/unstrech operation).\ We have used bipolar signaling, $b \in \{-1,1\}$.  
 We have calculated the maximal noise level $\sigma^2_{max}$ using Poltyrev generalized definition for channel capacity using unrestricted power assumption \cite{Poltyrev}. For bipolar signaling $\sigma^2_{max} = 4\sqrt[n]{\det(G)^2}/2\pi e$. When applied
 to lattices, the generalized capacity implies that there exists a
 lattice $G$ of high enough dimension $n$ that enables transmission
 with arbitrary small error probability, if and only if $\sigma^2 < \sigma^2_{max}$. Figure \ref{performance} plots SER\ (symbol error rate) of the NBP decoder vs. the LDLC decoder for code length $n=100,n=1000$. The $x$-axis represent the distance from capacity in dB as calculated using Poltyrov equation. As can be  seen, our novel NBP decoder has better SER for $n=100$ for all noise levels. For $n=1000$ we have better performance for high noise level, and comparable performance up to 0.3dB from LDLC for low noise levels. We are currently in the process of extending our implementation to support code lengths of up $n=100,000$. Initial performance results are very promising.\ 
\begin{figure}[h!]
%\centering{
\hspace{-2.5cm}
\includegraphics[scale=0.4]{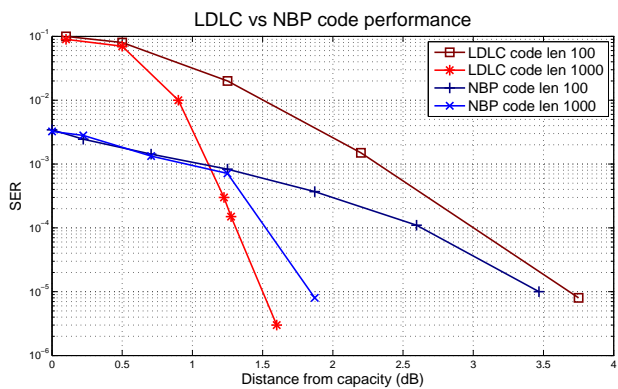}
\caption{NBP vs. LDLC decoder performance}
\label{performance}
%}
\end{figure}
\vspace{-5mm}
\section{Future work and open problems}
We have shown that the LDLC decoder is a variant of the NBP\ algorithm. This allowed us to use current research results from the non-parametric belief propagation domain,  to extend the decoder applicability in several directions. First, we have extended algorithm applicability from Latin squares to  full column rank matrices (possibly non-square). Second, We have extended the LDLC\ convergence analysis, by discovering simpler conditions for convergence. Third, we have presented a new family of LDLC which are based on sparse encoding matrices.

We are currently working on an open source implementation of the NBP\ based decoder, using an undirected graphical model, including a complete comparison of performance to the LDLC decoder. Another area of future work is to examine the practical performance of the efficient Gaussian mixture product sampling algorithms developed in the NBP domain to be applied for LDLC
decoder. As little is known about the convergence of the NBP\ algorithm, we plan to continue examine its convergence in different settings. Finally, we plan to investigate the applicability of the recent convergence fix algorithm \cite{ISIT09-1} for supporting decoding matrices where the sufficient conditions for convergence do not hold.
\ignore{
\subsection{Extending the LDLC decoder to full column rank matrices}
In this section we show that given a full column rank $H$ (possibly non square). 

\begin{lem} The first $k$ entries in the solution of the linear system of equations $Jx = \zeta$ as defined in \eqref{eq:matJ} is identical to the solution of the following system of linear equations
\[ H^THx=H^Ty \]
\end{lem}
\begin{proof} We start by expanding \eqref{eq:matJ} and writing it as a linear system of equations with two variables:  
\[  \left(\begin{array}{cc} \mathbf{0} & H^{T} \\ H & -\tfrac{1}{\sigma^2} I \\ \end{array} \right)\, 
\begin{pmatrix}x \\
z \\
\end{pmatrix}
= 
\begin{pmatrix}0 \\
y \\
\end{pmatrix}\,. \]
 \begin{align*} H^Tz = 0\,, & & &\ Hx-\tfrac{1}{\sigma^2}z=y\,. \end{align*}
\[ H^THx-\tfrac{1}{\sigma^2}\underbrace{H^Tz}_{=0} = H^Ty \]
\[ H^THx=H^Ty \]
\end{proof}
Using Lemma 2 results, we can equivalently check the convergence conditions [GaBP-I,II] on the matrix $H^TH$, instead of the matrix $J$.
Lemma 2 gives us a better understanding of the LDLC decoder, since as we have shown, without the periodic shift operation, the decoder actually computes the ML solution $x = (H^TH)^{-1}H^Ty.$ The periodic shift operation enforces the Integer constraints on $x$.
}%ignore

\ignore{
\section{Experimental results}
\label{s-exp-results}
In this section we tie together the theoretical results by providing an efficient implementation
of the extended LDLC decoder for solving the multiuser detection problem. This construction
could not be easily framed within the original LDLC formulation, since in general the encoding and
decoding matrices are not square.

Consider the multiuser detection problem of $K$ users, each spreading its symbol
onto $N$ chips in a linear AWGN channel.  In this formulation we have $y = Gb + w$,
where $b_i \in \{-1,1\}$ is the binary symbol transmitted by user $i$, $G_{N \times K}$ is the signature matrix,
y is the received vector (the observation), and $w$ is an $N \times 1$ vector
of i.i.d. AWGN with mean $0$ and variance $\sigma^2$.
We are then interested in solving the following least squares problem:
$$ x^* =\arg \mathop{\min }\limits_{b \in \Z^K}||Gb - y||^2 $$

We propose to solve the multiuser detection problem using an LDLC-like decoder. The Matlab source code for this example is
available at \cite{MatlabGABP}. Since, as in LDLC,  the number of Gaussian mixtures grows exponentially during the algorithm,
we have utilized the Kernel Density Estimation Toolbox for Matlab \cite{KDE} to computing approximate products
in the variable nodes.  The toolbox provides an implementation of several efficient, multi-scale approximation
algorithms developed in the context of NBP~\cite{NBP2}.

%We set the limit of number of Gaussians in each mixture to 30.
For our experiments, we construct the following sparse matrix $H$,
in which there are $N = 6$ check nodes and $K= 5$ variable nodes.
%\footnotesize
\[ \footnotesize H_{5 \times 6} \triangleq \left(
  \begin{array}{cccccc}
-0.8  &  0.7 &   0.5 &        0   &      0  & -1\\
  0   &  0   & 1&  -0.5   & 1  & -0.8\\
   -0.5&    1 &         0 &  -0.8 &        0 &   0.5\\
 1&         0 &   0.8 &        0   & 0.5   &      0\\
         0 &  -0.5 &        0   & 1 &   0.8 &        0\\
  \end{array}
\right)  \]
%\normalsize
We compute the corresponding signature matrix $G$ as the pseudo-inverse, $G = H^T(HH^T)^{-1}$, so that $HG = I$.
%In this case, $G_{6 \times 5} =$
%\footnotesize
%\[  =\left(
%  \begin{array}{ccccc}
%   -0.0698  & -0.2489  & -0.2164 &   0.6849  & -0.2672\\
%    0.5979  & -0.5711  &  0.5621 &   0.6277  &  0.1906\\
%    0.2845  &  0.0002  &  0.0073 &   0.4953  & -0.0471\\
%    0.5513  & -0.6836  & -0.0557 &   0.4436  &  0.6075\\
%   -0.3155  &  0.4976  &  0.4210 &  -0.1622  &  0.6098\\
%   -0.3834  & -0.2005  & 0.5702  &  0.1392   &  0.3236\\
%  \end{array}
%\right) \]
%\normalsize

We set the noise level $\sigma^2 = 0.18$, where $\sigma^2 > 1/2\pi e \simeq 0.05$, the noise level where the LDLC
codes reaches the channel capacity \cite[Section II-A]{LDLC_Sommer}. We run the extended LDLC decoder, and set the
variance convergence threshold to $\epsilon = 0.001$. Since $b \in \{-1,1\}$, the periodic shift operation in
Eq.~\eqref{eq:periodic} was implemented by two possible shifts, $-1$ and $1$. Each transmitting user
transmitted $\{-1,1\}$ with equal probability.

Using the above settings, the extended decoder converged after 5 iterations. Figure \ref{msgs} shows the messages
sent between one of the variable nodes, $b_1$, and a check node, $c_1$. The 5 sub-figures correspond to each
iteration of the algorithm, respectively.
In the first
iteration the message is a single Gaussian where its mean is $y_1$, the observation for variable node 1,
where the message variance is $\sigma^2$ \eqref{eq:v2c-init}. On the subsequent rounds, information received from
the check nodes weight mixture mass to the right. Blue triangles represents the centers of the mixture components.

Figure \ref{msgs2} plots the messages sent between check node one and variable node one on the same example.
Because there are two possible integer values: $\{-1,1\}$, the shift operation of Eq.~\eqref{eq:periodic} duplicates
the mixture twice. In the first iteration the mixture is composed of two Gaussians, because the initial
messages from variable to check nodes are a single Gaussian, which is duplicated twice.

\begin{figure}[ht]
\begin{minipage}[b]{0.5\linewidth}
\centering
\includegraphics[bb=118 176 489 622,clip,scale=0.5,height=250pt]{msgs.eps}
%  \begin{picture}(0,0)
%  \put(-242,223){Iter. 1}
%  \put(-122,178){Iter. 2}
%  \put(-122,130){Iter. 3}
%  \put(-122,87){Iter. 4}
%  \put(-122,40){Iter. 5}
%  \end{picture} \\
  \caption{Messages sent between variable $b_1$ one and check node $c_1$.}\label{msgs}
\vspace{0.2cm}
\end{minipage}
%\hspace{0.5cm}
\begin{minipage}[b]{0.5\linewidth}
\centering
\includegraphics[bb=108 203 505 595,scale=0.57,clip,height=250pt]{msgs2.eps}\\
 \begin{picture}(0,0)
 \put(-150,235){Iter. 1}
 \put(-150,185){Iter. 2}
 \put(-150,135){Iter. 3}
 \put(-150,85){Iter. 4}
 \put(-150,35){Iter. 5}
 \end{picture} \\
  \caption{Messages sent between check node $c_1$ and variable node $b_1$.}\label{msgs2}
%\vspace{-0.5cm}
\label{msgs2}
\end{minipage}
\end{figure}

For comparison, we have run an iterative multiuser detection algorithm based on GaBP~\cite{Allerton,ISIT2}. This
iterative algorithm relaxes the problem to a continuous domain, and clips the result using the
$\mathtt{sign()}$ function. We have repeated this experiment
100 times, where each time both the LDLC decoder and the GaBP decoder receive the same input.
In total, the LDLC decoder was able to overcome the noise and detect the transmission correctly
in 88\% of the tests, while the iterative decoder had a success of only 79\%.  This suggests that
the Gaussian mixture messages are better able to capture the imposed constraints and residual uncertainty
in the decoding process.

\section{Conclusion}
\label{s-conc}
We explore the relation between the LDLC decoder and the belief propagation algorithm. We show that the LDLC
decoder is related to two algorithms from machine learning, the non-parametric belief propagation algorithm and
the Gaussian belief propagation algorithm. Utilizing theoretical results from the Gaussian belief propagation
literature, we extend the LDLC decoder to support
arbitrary column dependent matrices, including non-square matrices.
Meanwhile, algorithms developed for non-parametric belief propagation provides fast and efficient implementation
approximating the message product operation for Gaussian mixtures.
As an example application, we demonstrate that the extended LDLC decoder can be used effectively for solving the
multiuser detection problem, obtaining significant improvements over a previous iterative decoder.}%ignore
\vspace{-1mm}
\section*{Acknowledgment}
D. Bickson would like to thank N. Sommer, M. Feder and Y. Yona from Tel Aviv University for interesting discussions and helpful insights regarding the LDLC algorithm and its implementation. D. Bickson was partially supported by grants NSF IIS-0803333,
NSF NeTS-NBD CNS-0721591
and
DARPA IPTO FA8750-09-1-0141. Danny Dolev is Incumbent of the Berthold Badler Chair in Computer Science. Danny Dolev was  supported in part by the Israeli Science Foundation (ISF) Grant number 0397373.

\scriptsize
\bibliographystyle{IEEEtran}   % (uses file "plain.bst")
\bibliography{Allerton09-2}       % expects file "myrefs.bib"

\begin{thebibliography}{10}
\providecommand{\url}[1]{#1}
\csname url@rmstyle\endcsname
\providecommand{\newblock}{\relax}
\providecommand{\bibinfo}[2]{#2}
\providecommand\BIBentrySTDinterwordspacing{\spaceskip=0pt\relax}
\providecommand\BIBentryALTinterwordstretchfactor{4}
\providecommand\BIBentryALTinterwordspacing{\spaceskip=\fontdimen2\font plus
\BIBentryALTinterwordstretchfactor\fontdimen3\font minus
  \fontdimen4\font\relax}
\providecommand\BIBforeignlanguage[2]{{%
\expandafter\ifx\csname l@#1\endcsname\relax
\typeout{** WARNING: IEEEtran.bst: No hyphenation pattern has been}%
\typeout{** loaded for the language `#1'. Using the pattern for}%
\typeout{** the default language instead.}%
\else
\language=\csname l@#1\endcsname
\fi
#2}}

\bibitem{BibDB:Gallager}
R.~G. Gallager, ``Low density parity check codes,'' \emph{IRE Trans. Inform.
  Theory}, vol.~8, pp. 21--28, 1962.

\bibitem{LDLC_Sommer}
N.~Sommer, M.~Feder, and O.~Shalvi, ``Low-density lattice codes,'' in
  \emph{IEEE Transactions on Information Theory}, vol.~54, no.~4, 2008, pp.
  1561--1585.

\bibitem{NBP}
E.~Sudderth, A.~Ihler, W.~Freeman, and A.~Willsky, ``Nonparametric belief
  propagation,'' in \emph{Conference on Computer Vision and Pattern Recognition
  (CVPR)}, June 2003.

\bibitem{CSBP06}
S.~Sarvotham, D.~Baron, and R.~G. Baraniuk, ``Compressed sensing reconstruction
  via belief propagation,'' Rice University, Houston, TX, Tech. Rep. TREE0601,
  July 2006.

\bibitem{CSBP2009}
D.~Baron, S.~Sarvotham, and R.~G. Baraniuk, ``Bayesian compressive sensing via
  belief propagation,'' \emph{IEEE Trans. Signal Processing, to appear}, 2009.

\bibitem{BibDB:FactorGraph}
F.~Kschischang, B.~Frey, and H.~A. Loeliger, ``Factor graphs and the
  sum-product algorithm,'' vol.~47, pp. 498--519, Feb. 2001.

\bibitem{BibDB:McElieceMacKayCheng}
R.~J. McEliece, D.~J.~C. MacKay, and J.~F. Cheng, ``Turbo decoding as an
  instance of {Pearl's} 'belief propagation' algorithm,'' vol.~16, pp.
  140--152, Feb. 1998.

\bibitem{BibDB:Weiss01Correctness}
Y.~Weiss and W.~T. Freeman, ``Correctness of belief propagation in {Gaussian}
  graphical models of arbitrary topology,'' \emph{Neural Computation}, vol.~13,
  no.~10, pp. 2173--2200, 2001.

\bibitem{NBP2}
A.~Ihler, E.~Sudderth, W.~Freeman, and A.~Willsky, ``Efficient multiscale
  sampling from products of gaussian mixtures,'' in \emph{Neural Information
  Processing Systems (NIPS)}, Dec. 2003.

\bibitem{briers05}
M.~Briers, A.~Doucet, and S.~S. Singh, ``Sequential auxiliary particle belief
  propagation,'' in \emph{International Conference on Information Fusion},
  2005, pp. 705--711.

\bibitem{rudoy07}
D.~Rudoy and P.~J. Wolf, ``Multi-scale {MCMC} methods for sampling from
  products of {G}aussian mixtures,'' in \emph{{IEEE} International Conference
  on Acoustics, Speech and Signal Processing}, vol.~3, 2007, pp.
  III--1201--III--1204.

\bibitem{KDE}
A. T. Ihler. Kernel Density Estimation Toolbox for {MATLAB} [online] {\tt
  http://www.ics.uci.edu/$\sim$ihler/code/}.

\bibitem{Ihler-JSAC}
A.~T. Ihler, Fisher, R.~L. Moses, and A.~S. Willsky, ``Nonparametric belief
  propagation for self-localization of sensor networks,'' \emph{Selected Areas
  in Communications, IEEE Journal on}, vol.~23, no.~4, pp. 809--819, 2005.

\bibitem{Ihler-SSP}
A.~T. Ihler, J.~W. Fisher, and A.~S. Willsky, ``Particle filtering under
  communications constraints,'' in \emph{Statistical Signal Processing, 2005
  IEEE/SP 13th Workshop on}, 2005, pp. 89--94.

\bibitem{LDLC_Brian}
B.~Kurkoski and J.~Dauwels, ``Message-passing decoding of lattices using
  {G}aussian mixtures,'' in \emph{IEEE Int. Symp. on Inform. Theory (ISIT)},
  Toronto, Canada, July 2008.

\bibitem{LDLC_Yona}
Y.~Yona and M.~Feder, ``Efficient parametric decoder of low density lattice
  codes,'' in \emph{IEEE International Symposium on Information Theory (ISIT)},
  Seoul, S. Korea, July 2009.

\bibitem{LDLC_Brian2}
B.~M. Kurkoski, K.~Yamaguchi, and K.~Kobayashi, ``Single-{G}aussian messages
  and noise thresholds for decoding low-density lattice codes,'' in \emph{IEEE
  International Symposium on Information Theory (ISIT)}, Seoul, S. Korea, July
  2009.

\bibitem{phd-thesis}
D.~Bickson, ``Gaussian belief propagation: Theory and application,'' Ph.D.
  dissertation, The Hebrew University of Jerusalem, October 2008.

\bibitem{BibDB:BookPearl}
J.~Pearl, \emph{Probabilistic Reasoning in Intelligent Systems: Networks of
  Plausible Inference}.\hskip 1em plus 0.5em minus 0.4em\relax San Francisco:
  Morgan Kaufmann, 1988.

\bibitem{ISIT1}
O.~Shental, D.~Bickson, P.~H. Siegel, J.~K. Wolf, and D.~Dolev, ``Gaussian
  belief propagation solver for systems of linear equations,'' in \emph{IEEE
  International Symposium on Information Theory (ISIT)}, Toronto, Canada, July
  2008.

\bibitem{ihler05b}
A.~T. Ihler, J.~W.~F. III, and A.~S. Willsky, ``Loopy belief propagation:
  Convergence and effects of message errors,'' \emph{Journal of Machine
  Learning Research}, vol.~6, pp. 905--936, May 2005.

\bibitem{BibDB:mjw_walksum_jmlr06}
D.~M. Malioutov, J.~K. Johnson, and A.~S. Willsky, ``Walk-sums and belief
  propagation in {Gaussian} graphical models,'' \emph{Journal of Machine
  Learning Research}, vol.~7, Oct. 2006.

\bibitem{ISIT2}
D.~Bickson, O.~Shental, P.~H. Siegel, J.~K. Wolf, and D.~Dolev, ``Gaussian
  belief propagation based multiuser detection,'' in \emph{IEEE International
  Symposium on Information Theory (ISIT)}, Toronto, Canada, July 2008.

\bibitem{MatlabGABP}
{Gaussian Belief Propagation} implementation in matlab [online] {\tt
  http://www.cs.huji.ac.il/labs/danss/p2p/gabp/}.

\bibitem{Poltyrev}
G.~Poltyrev, ``On coding without restrictions for the {AWGN} channel,'' in
  \emph{IEEE Trans. Inform. Theory}, vol.~40, Mar. 1994, pp. 409--417.

\bibitem{ISIT09-1}
J.~K. Johnson, D.~Bickson, and D.~Dolev, ``Fixing convergence of {G}aussian
  belief propagation,'' in \emph{IEEE International Symposium on Information
  Theory (ISIT)}, Seoul, South Korea, 2009.

\end{thebibliography}

\end{document}